\documentclass{article}
\usepackage{arxiv}
\usepackage[OT2, T1]{fontenc}
\usepackage[russian, english]{babel}
\usepackage{caption}
\usepackage{lscape} 
\usepackage{amsmath}
\usepackage{geometry}
\usepackage{amsthm}
\usepackage{adjustbox}
\usepackage{amsfonts}
\usepackage{mathtools}
\usepackage{verbatim}
\usepackage{algpseudocode,algorithm,algorithmicx}
\usepackage{mathtools}
\usepackage{comment}
\usepackage{verbatim, times, booktabs, dcolumn, setspace, fullpage}

\newtheorem{theorem}{Theorem}

\newtheorem{definition}{Definition}

\usepackage{graphicx}
\usepackage{wrapfig}
\usepackage{caption}
\usepackage{mathtools}
\usepackage{float}
\usepackage{enumitem}

\usepackage[english,noautotitles-r]{SASnRdisplay} 
\usepackage{url}
\usepackage{subfigure}
\lstdefinestyle{r-output}{
style = r-style,
style = r-output-user,
}
\DeclareMathOperator\etr{etr}
\usepackage{url}            
\usepackage{booktabs}       
\usepackage{amsfonts}       
\usepackage{nicefrac}       
\usepackage{microtype}      
\usepackage{graphicx}
\usepackage{doi}

\title{Change point analysis - the empirical Hankel transform approach

}

\author{ \href{https://orcid.org/0000-0002-1964-7539}{\includegraphics[scale=0.06]{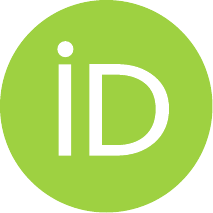}\hspace{1mm} Žikica Lukić} \\
	PhD student at the Faculty of Mathematics\\
	University of Belgrade\\
	Belgrade, 11000, Serbia \\
	\texttt{zikicamaster@gmail.com} \\
	\And
	\href{https://orcid.org/0000-0001-8243-9794}{\includegraphics[scale=0.06]{orcid.pdf}\hspace{1mm}Bojana Milošević} \\
    Faculty of Mathematics\\
	University of Belgrade\\
	Belgrade, 11000, Serbia \\
	\texttt{bojana.milosevic@matf.bg.ac.rs} \\
}

\begin{document}
\maketitle

\begin{abstract}
In this study, we introduce the first-of-its-kind class of tests for detecting change points in the distribution of a sequence of independent matrix-valued random variables. The tests are constructed using the weighted square integral difference of the empirical orthogonal Hankel transforms. The test statistics have a convenient closed-form expression, making them easy to implement in practice. We present their limiting properties and demonstrate their quality through an extensive simulation study. We utilize these tests for change point detection in cryptocurrency markets to showcase their practical use. The detection of change points in this context can have various applications in constructing and analyzing novel trading systems.
\end{abstract}
\keywords{matrix distributions; change point detection; integral transforms}
\textbf{MSC2020:} {62H15, 62P05}

\section{Introduction}\label{sec::headings}
Change point analysis has garnered scientific interest in recent years due to numerous applications in finance \cite{andreou2002detecting, andreou2009structural, thies2018bayesian}, genetics \cite{chen2012parametric}, medicine \cite{contal1999application}, ecology \cite{achcar2010non}, climatology \cite{li2012multiple}, and other fields. Many of the methods outlined in the statistical literature focus on so-called parametric change point analysis, where a certain model is assumed and the existence of the change point corresponds to a change in the parameters of that model. This problem has been addressed in numerous works. We mention only a few examples, such as \cite{chakar2017robust, davis2006structural,fryzlewicz2014wild,tsukuda2014l2}.

Moreover, there are non-parametric change point methods that do not rely on any underlying assumptions regarding the model. In these cases, the primary focus is typically on detecting change points in the distribution of sample elements. Numerous papers address this issue, for example, \cite{brodsky1993nonparametric, dehling2013non, hawkins2010nonparametric}. One popular approach for nonparametric change point inference involves the use of two-sample test statistics. Specifically, a test statistic is formed using the following nonparametric analogue of CUSUM-type statistic, that consists of the following steps:
\begin{enumerate}[label=\arabic*)]
     \item The sample of size $n$ is divided into two subsamples of lengths $k$ and $n-k$, respectively;
    \item The two-sample statistic is calculated for selected subsamples, and multiplied by a suitable constant; 
    \item The value of the change point test statistic is taken to be the maximum of such values over $k$ ranging from 1 to $n$.
\end{enumerate}
This {idea, specifically designed for detecting change points, was introduced for the first time in \cite{huvskova2006change}. The authors utilized a characteristic-function-based two-sample test to implement this approach.}

Many papers are concerned with empirical characteristic function as a preferred transform, as in \cite{hlavka2020change,huvskova2006rank, meintanis2008testing, tan2016nonparametric}. However, to the best of our knowledge,  there are no results which utilize other types of integral transforms common in goodness-of-fit problems, like Laplace transform \cite{cuparic2022new,henze2012goodness, milovsevic2016new} or Hankel transform \cite{baringhaus2015two,baringhaus2010empirical,  hadjicosta2020Gamma}. Here, for the first time, we study the problem of detection of change in the distribution of matrix-valued sample elements.  The novel statistic is based on the recently proposed Hankel transform-based two-sample test, which has demonstrated efficient applicability for trading on cryptocurrency markets.

The paper is organized as follows. In Section \ref{sec::teststat}, we formally introduce the novel class of test statistics and represent it in a computationally convenient way. The asymptotic properties of the test statistics are then studied in Section \ref{sec::asym}. Section \ref{sec::power} is dedicated to small sample analysis and contains the results of a wide empirical study. The applicability of the test for detecting changes in cryptocurrency markets is illustrated in Section \ref{sec::realdata}. The Appendix contains the warp speed permutation bootstrap algorithm used in Section \ref{sec::power}, as well as the binary segmentation algorithm used for the detection of multiple change points in Section \ref{sec::realdata}.

\section{The test statistic}\label{sec::teststat}
Let us introduce some general definitions. The cumulative distribution function of the absolutely continuous matrix distribution on the cone of positive-definite symmetric matrices can be defined as follows (see, for instance, \cite{gupta2018matrix}):

\begin{equation*} F_X(A) = P(X < A) = \int_{0 < X < A} f(X; \theta) dX, \end{equation*}

where $f(X; \theta)$ represents the density function of $X$, potentially dependent on parameters $\theta \in \Theta$, and $A < B$ corresponds to the order relation defined as $A < B$ if $B-A$ is positive-definite. $A>0$ denotes that the matrix $A$ is positive-definite.


The notion of orthogonally invariant matrix-variate random variables was introduced in \cite{nas1}. Two matrix-variate random variables are considered orthogonally invariant in distribution (OID) if there exists an orthogonal matrix $\Gamma$ such that the distributions of matrices $A$ and $\Gamma B \Gamma'$ are the same. $B'$ denotes the matrix transpose of the matrix $B$. We refer to random variables satisfying this criterion as orthogonally invariant random variables and write $A\overset{OID}{=}B$. 

We modify the test statistic given in \cite{nas1} to address the change point type of problems. Let $X_1, X_2, \dots, X_n$ be the sample of independent symmetric positive definite random matrices, where $X_j$ has a cumulative distribution function $F_j$. 

    We want to test the hypothesis 
    \begin{equation}\label{nultahip}
     \begin{aligned}
        &H_0: X_1\overset{OID}{=}X_2\overset{OID}{=}\dots\overset{OID}{=}X_n\\
        &\quad \quad\quad\quad\quad\text{against the alternative}\\ &H_1: X_1\overset{OID}{=}X_2\overset{OID}{=}\dots \neq X_k\overset{OID}{=}X_{k+1}\overset{OID}{=}\dots \overset{OID}{=} X_n,
    \end{aligned}
    \end{equation} where the corresponding distributions of $X_1$ and $X_n$ are unknown and the index $k$ is unknown as well. In order to do that, we make use of the orthogonally invariant Hankel transform, defined as follows (see \cite{hadjicosta2020integral}):

\begin{definition}
Let $X>0$ be a random matrix with probability density function $f(X)$. For $\Re(\nu)>\frac{1}{2}(m-2)$ we define the orthogonally invariant Hankel transform of order $\nu$ as the function
\begin{align*}
    \Tilde{\mathcal{H}}_{X, \nu}(T)=E_X\big(\Gamma_m\big(\nu+\frac{1}{2}(m+1)\big)A_\nu(T, X)\big),
\end{align*}
where $T>0$, $\Gamma_m$ denotes the multivariate Gamma function and $A_\nu(T, X)$ denotes the Bessel function of the first kind of order $\nu$ with two matrix arguments. 
\end{definition}


{For further information on Bessel functions of matrix arguments, we refer to \cite{hadjicosta2020integral, herz1955bessel, muirhead2009aspects}.}

Its consistent estimator, the empirical orthogonal Hankel transform of order $\nu$,  is defined as (see \cite{hadjicosta2020integral}):
\begin{equation*}
   \Tilde{\mathcal{H}}_{n_1, \nu} (T)=\Gamma_m(\nu+\frac{1}{2}(m+1))\frac{1}{n_1}\sum\limits_{j=1}^{n_1} A_\nu (T, X_j) =\frac{1}{n_1}\sum\limits_{j=1}^{n_1} J_\nu (T, X_j).
\end{equation*}
For testing hypotheses \eqref{nultahip}, we propose the following test statistic:
\begin{align*}
        \mathcal{I}_{n, \gamma,  \nu}=&\max\limits_{1\leq k <n} \Big(\frac{k(n-k)}{n^2}\Big)^\gamma 
\frac{k(n-k)}{n}\int_{T>0}\Big(\Tilde{\mathcal{H}}_{{k}, \nu} (T)-\Tilde{\mathcal{H}}^0_{{n-k}, \nu} (T)\Big)^2dW(T),
\end{align*}
where $dW(T)$ is a standard Wishart measure (see \cite{hadjicosta2020integral, nas1}), $\Tilde{\mathcal{H}}_{{k}, \nu}$ denotes the empirical orthogonal Hankel transform of the first $k$ elements of the sample, while $\Tilde{\mathcal{H}}^0_{{n-k}, \nu}$ denotes the orthogonal empirical Hankel transform of the last $n-k$ elements of the sample. The test statistic can be expressed as:
\begin{align*}
        \mathcal{I}_{n, \gamma,  \nu}=&\max\limits_{1\leq k <n} \Big(\frac{k(n-k)}{n^2}\Big)^\gamma \frac{k(n-k)}{n}\Big(\frac{1}{k^2}\sum\limits_{i=1}^{k}\sum\limits_{j=1}^k \etr(-X_i-X_j)J_\nu(-X_i, X_j)+\\
     &\frac{1}{(n-k)^2}\sum\limits_{i=k+1}^n\sum\limits_{j=k+1}^n \etr(-X_i-X_j)J_\nu(-X_i, X_j)-\frac{2}{k(n-k)}\sum\limits_{i=1}^k\sum\limits_{j=k+1}^n \etr(-X_i-X_j)J_\nu (-X_i, X_j)\Big)\Big),
\end{align*}
where $\etr(X)=\exp(\text{Trace}(X))$ and $J_\nu(T,X) = \Gamma\big(\nu + \tfrac12 (d+1)\big) A_\nu(T,X)$.
 {Note that under the null hypothesis ($H_0$), a small value of the test statistic is expected. Hence, large values of the test statistic are considered significant.} In the next section, we present the asymptotic results of the novel test.
\section{Asymptotic results}\label{sec::asym}
In order to obtain the limiting null distribution of test statistic we first 
note that the test statistic $\mathcal{I}$ can be represented as  
$\mathcal{I}_{n, \gamma,  \nu}=\max\limits_{1\leq k\leq n} c_{n, k}(\gamma)I_{k, n-k, \nu}$, where $I_{k, n-k, \nu}$ is a two-sample test statistic introduced in \cite{nas1}. For the sake of brevity, we drop the parameter $\nu$ in the subsequent text and simply denote the test statistic as $I_{k, n-k}$.

\begin{theorem}\label{asimptotikaM}
    Let $X_1, X_2, \dots X_n$ be independent matrix-variate random variables orthogonally invariant in distribution, where $X_1\in F$. Let $\gamma \in (0, 1]$ and let $(\lambda_j)_{j=1}^\infty$ be the descending sequence of eigenvalues defined in \eqref{eigenvalues}.
Then the asymptotic distribution of $\mathcal{I}_{n, \gamma, \nu}$ is the same as that of
    \begin{equation*}
        \sup\limits_{t\in (0, 1)}(t(1-t))^\gamma \Big|(EJ_\nu(-X_1, X_1)-E \etr(-X_1-X_2)J_\nu (-X_1, X_2))+\sum\limits_{j=1}^\infty \lambda_j\Big(\frac{B_j^2(t)}{t(1-t)}-1\Big)\Big|, 
    \end{equation*}
    where $\{B_{j, n}(t),\; t\in (0,1)\}, j=1, 2, \dots$ are independent Brownian bridges.
\end{theorem}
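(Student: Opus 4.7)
The plan is to reduce $\mathcal{I}_{n,\gamma,\nu}$ to a continuous functional of an $L^2(dW)$-valued partial-sum process, apply a Hilbert-space functional CLT, and conclude via the continuous-mapping theorem. Setting $t=k/n$ and $\mu(T)=EJ_\nu(T,X_1)$, I would introduce the centered partial-sum process
\begin{equation*}
U_n(t,T)=\frac{1}{\sqrt{n}}\sum_{i=1}^{\lfloor tn\rfloor}\bigl\{J_\nu(T,X_i)-\mu(T)\bigr\}.
\end{equation*}
Under $H_0$ every summand has the same mean $\mu(T)$, and adding and subtracting $\mu(T)$ in both empirical Hankel transforms produces the algebraic identity
\begin{equation*}
\tfrac{k(n-k)}{n}\bigl(\tilde{\mathcal{H}}_{k,\nu}(T)-\tilde{\mathcal{H}}^0_{n-k,\nu}(T)\bigr)^2=\frac{(U_n(t,T)-tU_n(1,T))^2}{t(1-t)},
\end{equation*}
so that $I_{k,n-k,\nu}=(t(1-t))^{-1}\int(U_n(t,T)-tU_n(1,T))^2\,dW(T)$ and $\mathcal{I}_{n,\gamma,\nu}$ becomes a weighted supremum of a quadratic functional of $U_n$.

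Next, the elements $J_\nu(\cdot,X_i)-\mu(\cdot)$ are i.i.d.\ in $L^2(dW)$ with finite second moment, a consequence of $X_1\in F$ together with the identity $\int J_\nu(T,X)J_\nu(T,X')\,dW(T)=\etr(-X-X')J_\nu(-X,X')$ used to derive the closed-form expression in Section \ref{sec::teststat}. The Hilbert-space Donsker theorem therefore gives $U_n\Rightarrow W$ in $D([0,1];L^2(dW))$, where $W$ is a Brownian motion with covariance operator of kernel $K(T,T')=\mathrm{Cov}(J_\nu(T,X_1),J_\nu(T',X_1))$. Karhunen-Loève decomposition $K=\sum_j\lambda_j\phi_j\otimes\phi_j$ yields $W(t,T)-tW(1,T)=\sum_j\sqrt{\lambda_j}\,\phi_j(T)B_j(t)$ with independent Brownian bridges $B_j$, and Parseval gives $\int(W(t,T)-tW(1,T))^2\,dW(T)=\sum_j\lambda_j B_j^2(t)$. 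The continuous-mapping theorem then produces the weak limit $\sup_t(t(1-t))^{\gamma-1}\sum_j\lambda_j B_j^2(t)$; the form stated in the theorem is obtained by identifying the constant $EJ_\nu(-X_1,X_1)-E\etr(-X_1-X_2)J_\nu(-X_1,X_2)$ with $\mathrm{tr}(K)=\sum_j\lambda_j$ and then adding and subtracting it inside the absolute value.

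The main technical obstacle is tightness of the weighted process $t\mapsto(t(1-t))^\gamma I_n(t)$ on the open interval $(0,1)$. Since $I_n(t)$ is of order $1/(t(1-t))$ near the endpoints, the weight $(t(1-t))^\gamma$ with $\gamma\in(0,1]$ only marginally suppresses the boundary singularity, so the supremum is not a priori a continuous functional on $D$. One has to show that for every $\eta>0$ there exists $\epsilon>0$ with
\begin{equation*}
\limsup_{n\to\infty}P\Bigl(\sup_{t\in(0,\epsilon)\cup(1-\epsilon,1)}(t(1-t))^\gamma I_n(t)>\eta\Bigr)<\eta,
\end{equation*}
which I would obtain through a weighted H\'ajek-R\'enyi-type maximal inequality for the $L^2(dW)$-valued partial-sum process $U_n$ paired with Chibisov-O'Reilly-type boundary control for the limit series $\sum_j\lambda_j B_j^2(t)/(t(1-t))$. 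Finite-dimensional convergence of the weighted process follows from the multivariate CLT applied to finite-dimensional projections onto $\{\phi_j\}$, and combined with tightness, the continuous-mapping theorem closes the argument.
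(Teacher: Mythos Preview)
Your approach is correct and genuinely different from the paper's proof. The paper works with the two-sample statistic as a V-statistic in the kernel $q(x,y)=\etr(-x-y)J_\nu(-x,y)$: it carries out a Hoeffding-type decomposition $I_{k,n-k}=C_{k1}+C_{k2}+C_{k3}+C_{k4}$, isolating the degenerate kernel $\tilde q$ in $C_{k1}$, the diagonal constant in $C_{k2}$, and two remainder terms which are shown negligible via the H\'ajek--R\'enyi inequality; $C_{k1}$ is then handled by truncating the spectral expansion $\tilde q_K=\sum_{s\le K}\lambda_s f_s\otimes f_s$ and applying a finite-dimensional Donsker theorem to the partial sums of $(f_1(X_i),\ldots,f_K(X_i))$, together with a martingale/H\'ajek--R\'enyi argument to control the truncation error uniformly in $k$. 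You instead keep $I_{k,n-k}$ as the squared $L^2(dW)$-norm of a Hilbert-space partial-sum bridge and invoke a Hilbert-space Donsker theorem in one stroke, reading off the eigenvalues from the Karhunen--Lo\`eve decomposition of the limiting covariance operator $K$. The two sets of eigenvalues coincide because $\tilde q(x,y)=\langle J_\nu(\cdot,x)-\mu,\,J_\nu(\cdot,y)-\mu\rangle_{L^2(dW)}$ makes $\tilde q$ and $K$ dual kernels with the same nonzero spectrum; you should state this identification explicitly so that your $\lambda_j$ are literally those of \eqref{eigenvalues}. Your route is conceptually cleaner---the diagonal and the linear part of the Hoeffding decomposition are absorbed automatically into the norm and the centering, so no separate $C_{k2},C_{k3},C_{k4}$ bookkeeping is needed---while the paper's approach keeps the U-statistic structure and the degenerate kernel in the foreground and obtains the constant $EJ_\nu(-X_1,X_1)-Eq(X_1,X_2)$ directly from $C_{k2}$ rather than through the trace identity $\mathrm{tr}(K)=\sum_j\lambda_j$. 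Both arguments face the same boundary-tightness obstacle for $\gamma\in(0,1)$ and both resolve it with H\'ajek--R\'enyi-type maximal inequalities, so on that point the routes essentially merge.
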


\begin{proof}
We begin the proof by introducing some preliminary notions.
If we denote with $q(x, y)=\etr(-x-y)J_\nu (-x, y)$ and with $\Tilde{q}(x, y)=q(x, y)-E(q(Y, X_s)|Y=x)-E(q(X_r, Y)|Y=y)+Eq(X_r, X_s),\; r \neq s$, the following set of equalities holds:
\begin{equation}\label{nejednakosti}
E\Tilde{q}(X_r, X_s|X_r)=E\Tilde{q}(X_r, X_s|X_s)=E\Tilde{q}(X_r, X_s)=0.
\end{equation}
It is worth noting that the function $q(x, y)$ is symmetric in its arguments. Moreover, if we denote with $F$ the distribution function of $X_1$, we can easily establish
\begin{equation*}
    E\Tilde{q}^2(X_1, X_2)=\int_{X_1>0, X_2>0} \Tilde{q}^2(X_1, X_2)dF(X_1)dF(X_2)<\infty. 
\end{equation*}
From this, we conclude that there exist orthogonal eigenfunctions ${f_s(t),\; s=1, 2, \dots}$ and corresponding eigenvalues ${\lambda_s,\; s=1, 2, \dots}$, such that the following spectral approximation holds (see \cite{borovskikh1994theory, serfling2009approximation}): 
\begin{align}
   \lim\limits_{K\to\infty} \int_{X_1>0, X_2>0} (\Tilde{q}^2(X_1, X_2)-\sum\limits_{s=1}^K\lambda_s f_s(X_1)f_s(X_2))^2 dF(X_1)dF(X_2)=0;\label{eigenvalues}
   \end{align}
   \begin{align*}
  & \int_{X_1>0} f^2_s(X_1)dF(X_1)=1, \; s=1, 2, \dots;\\
  & \int_{X_1>0} f_r(X_1)f_s(X_1)dF(X_1)=0, \;r\neq s=1, 2, \dots,
 \end{align*}
  from  which we obtain
  \begin{align*}
  &E\Tilde{q}^2(X_1, X_2)=\int_{X_1>0, X_2>0} \Tilde{q}^2(X_1, X_2)dF(X_1)dF(X_2)=\sum\limits_{j=1}^\infty \lambda_j^2.
\end{align*}
From now on, we break the proof into several steps. 

In Step (I),
we decompose the test statistic into four parts,  and show that two of them do not influence the limiting distribution. 
We  assert this by making repeated use of the H\'ajek-R\'enyi inequality to majorate the terms.

In Step (II), we determine the finite-dimensional distributions of the limiting process, by establishing the approximation formula, and making use of the Donsker theorem.

In Step (III), using the properties of the Wiener process, we determine the asymptotic distribution in question and finalize the proof. 
\medskip

\noindent\underline{Step (I):} The following decomposition holds:
\begin{equation*}
    I_{k, n-k}=C_{k1}+C_{k2}+C_{k3}+C_{k4},
\end{equation*}
where 
\begin{align*}
    &C_{k1}=\frac{n}{k(n-k)}\Big(\frac{1}{k}\sum\limits_{v=1}^k\sum\limits_{s=1, s\neq v}^k \Tilde{q}(X_v, X_s)+\frac{1}{n-k}\sum\limits_{v=k+1}^n\sum\limits_{s=k+1, s\neq v}^n \Tilde{q}(X_v, X_s)-\frac{1}{n}\sum\limits_{v=1}^n\sum\limits_{s=1, s\neq v}^n \Tilde{q}(X_v, X_s)\Big), \\
    &C_{k2}=\frac{n}{k(n-k)}(EJ_\nu(-X_i, X_i)-Eq(X_1, X_2)),\\
    &C_{k3}=-\frac{2}{k^2}\sum\limits_{r=1}^k(Eq((X_r, X_s)|X_r)-Eq(X_1, X_2))-\frac{2}{(n-k)^2}\sum\limits_{r=k+1}^n (Eq((X_r, X_s)|X_r)-Eq(X_1, X_2)),\\
    &C_{k4}=\frac{1}{k^2}\sum\limits_{i=1}^k (J_\nu(-X_i, X_i)-EJ_\nu(-X_i, X_i))+\frac{1}{(n-k)^2}\sum\limits_{i=k+1}^n (J_\nu(-X_i, X_i)-EJ_\nu(-X_i, X_i)). 
\end{align*}

 The main idea is to demonstrate that $C_{k3}$ and $C_{k4}$ have no impact on the limiting distribution, while the non-random term $C_{k2}$ and the random term $C_{k1}$ do. Then, we establish the limiting distribution of $C_{k1}$, which is not a straightforward task and depends on the ability to determine the distribution of an arbitrarily close approximation. The final part of Step (I) consists of demonstrating that the asymptotic distributions of $C_{k1}$ and the approximation are identical.

We begin by showing that
$C_{k3}$ and $C_{k4}$ do not influence the limiting distribution. 

We can see that if $X$ is a random variable independent of $X_r$ having the same distribution function $F$, random variables
\begin{equation*}
    L_r=E(q(X_r, X)|X_r)-Eq(X_1, X_2), r=1, 2, \dots, n
\end{equation*}
are independent and identically distributed with zero mean and finite variance. 
Note that while $X_i$ are matrix random variables, random variables $L_r$ are real-valued. By applying the H\'ajek-R\'enyi inequality, we obtain that for every positive real constant $A$ and every $\gamma \in [0, 1]$, the following inequality holds:
\begin{equation*}
    P_{H_0}\Big(\max\limits_{1\leq k < n} c_{k, n}(\gamma) \frac{2}{k^2}\Big|\sum\limits_{r=1}^k L_r\Big | \geq A\Big)\leq \frac{4Var(L_1)}{A^2 n^{2\gamma}}\sum\limits_{k=1}^n \frac{1}{k ^{2(1-\gamma)}}.
\end{equation*}
The last term can be upper-bounded by $D_3n^{-\min({2\gamma}, 1)}(1+Ind(\gamma\geq\frac{1}{2})\log(n))$, where $D_3$ is a constant greater than zero. Similarly, we also have
\begin{equation*}
    P_{H_0}\Big(\max\limits_{1\leq k < n} c_{k, n}(\gamma) \frac{2}{(n-k)^2}\Big|\sum\limits_{r=k+1}^n L_r\Big | \geq A\Big)\leq D_3n^{-\min({2\gamma}, 1)}(1+Ind(\gamma\geq \frac{1}{2})\log(n)).
\end{equation*}
Therefore, given that the above inequalities hold:
\begin{equation*}
    \max\limits_{1\leq k<n} c_{k, n}(\gamma)|C_{k3}|=o_P(1), n\to\infty, 
\end{equation*}
and the term $|C_{k3}|$ does not impact the limiting distribution.

Next, we demonstrate that $C_{k4}$ does not impact the limiting behavior. We note that the random variables

\begin{equation*}
    K_r=J_\nu(-X_r,X_r)-EJ_\nu (-X_r, X_r), r=1, 2, \dots, n
\end{equation*}
are independent and identically distributed with zero mean and finite variance. By applying the H\'ajek-R\'enyi inequality, we obtain that for every real constants $A>0$ and $\gamma \in [0, 1]$, the following inequality holds:
\begin{equation*}
    P_{H_0}\Big(\max\limits_{1\leq k < n} c_{k, n}(\gamma) \frac{1}{k^2}\Big|\sum\limits_{r=1}^k K_r\Big | \geq A\Big)\leq \frac{Var(K_1)}{A^2 n^{2\gamma}}\sum\limits_{k=1}^n \frac{1}{k ^{2(1-\gamma)}}.
\end{equation*}
The last term can be upper-bounded by $D_4 n^{-\min({2\gamma}, 1)}(1+Ind(\gamma\geq \frac{1}{2})\log(n))$, where $D_4$ is a positive constant. 
\begin{equation*}
    P_{H_0}\Big(\max\limits_{1\leq k < n} c_{k, n}(\gamma) \frac{1}{(n-k)^2}\Big|\sum\limits_{r=k+1}^n K_r\Big | \geq A\Big)\leq D_4 n^{-\min({2\gamma}, 1)}(1+Ind(\gamma\geq \frac{1}{2})\log(n)).
\end{equation*}
Hence, it follows that
\begin{equation*}
    \max\limits_{1\leq k<n} c_{k, n}(\gamma)|C_{k4}|=o_P(1),\; n\to\infty, 
\end{equation*}
and the term $|C_{k4}|$ does not affect the limiting distribution.
Furthermore, it is worth noting that
\begin{align*}
Var(C_{k1})=O\Big(2\Big(\frac{n}{k(n-k)}\Big)^2E\Tilde{q}^2(X_1, X_2)\Big).
\end{align*}
Since  $\sqrt{Var(C_{k1})}$ and $C_{k2}$ are of the same order, it remains to determine the limiting distribution of $C_{k1}$ in Step (II). 

\underline{Step (II):} 
{Consider the auxiliary statistic:
\begin{equation*}
    S_k(\Tilde{q})=\sum\limits_{1\leq i < j \leq k} \Tilde{q}(X_i, X_j), \; k = 1, 2, \dots, n.
\end{equation*}
Under $H_0$, we have:
\begin{equation}\label{uslovMart}
    E(S_{k+1}(\Tilde{q}) | X_1, X_2, \dots, X_k)=S_k(\Tilde{q})+\sum\limits_{j=1}^k E(\Tilde{q}(X_i, X_{k+1})|X_1, X_2, \dots, X_k)=S_k(\Tilde{q}), \; k=1, 2, \dots, n-1.
\end{equation}
We establish that $\{(S_k(\Tilde{q}), \sigma(X_1, X_2, \dots, X_k)), k=1, 2, \dots, n\}$ is a martingale. $\sigma(X_1, X_2, \dots, X_k)$ denotes the $\sigma$-algebra generated by $X_1, X_2, \dots, X_k$.
}
Applying the H\'ayek-R\'enyi inequality yields the following result:
{
\begin{align*}
    &P(\max\limits_{1\leq k < n} n\Big(\frac{k(n-k)}{n^2}\Big)^{\gamma+1} \frac{1}{k^2}|S_k(\Tilde{q})|\geq A)\leq P(\max\limits_{1\leq k < n} n\Big(\frac{k}{n}\Big)^{\gamma+1} \frac{1}{k^2}|S_k(\Tilde{q})|\geq A)\leq \\&\frac{1}{n^{2\gamma+2}}\frac{1}{A^2}\sum\limits_{k=1}^{n-1} \frac{1}{k^{2\gamma-2}}E(S_k(\Tilde{q})-S_{k-1}(\Tilde{q}))^2\leq
    \frac{1}{n^{2\gamma}A^2}\sum\limits_{k=1}^{n-1}\frac{1}{k^{2-2\gamma}} E(S_k(\Tilde{q})-S_{k-1}(\Tilde{q}))^2\\
    &\leq \frac{D_0}{A^2}E\Tilde{q}^2(X_1, X_2),
\end{align*}
for some positive constant $D_0$. Repeating the computation for the function $\Tilde{S}_k(\Tilde{q})=\sum\limits_{k+1\leq i<j\leq n} \Tilde{q}(X_i, X_j)$, which satisfies \eqref{uslovMart}, and combining the results, we obtain:
}
\begin{equation*}
    P(\max\limits_{1 \leq k < n} c_{k, n}(\gamma)|C_{k1}|\geq A)\leq \frac{D_1}{A^2} E\Tilde{q}^2(X_1, X_2), 
\end{equation*}
for some positive constant $D_1$. This equality holds true if we replace $\Tilde{q}$ with any function that satisfies \eqref{nejednakosti}. We utilize the function $\Tilde{q}-\Tilde{q}_K$, such that $\Tilde{q}_K$ is defined by:
\begin{equation*}
    \Tilde{q}_K(x, y)=\sum\limits_{s=1}^K \lambda_s f_s(x)f_s(y),
\end{equation*}
where $(\lambda_s, f_s)$ are the (sorted) eigenvalues and eigenfunctions defined in \eqref{eigenvalues} and $K$ denotes the arbitrary natural number. 

Moreover, by utilizing the fact that
\begin{equation*}
    E\Tilde{q}^2(X_1, X_2) = \sum\limits_{k=1}^\infty \lambda^2_k,
\end{equation*}
and applying the H\'ayek-R\'enyi inequality once more, we establish that
\begin{equation}\label{nejednakostC}
    P\Big(\max\limits_{1\leq k < n} c_{k, n}(\gamma) |C_{k1}- C_{k1}(K)|\geq A\Big)\leq \frac{D_5}{A^2}E(\Tilde{q}(X_1, X_2)-\Tilde{q}_K(X_1, X_2))^2=
    \frac{D_5}{A^2}\sum\limits_{j=K+1}^\infty \lambda_j^2,
\end{equation}
where $D_5$ is a positive constant and $n\geq 2$ and $K\in \mathbb{N}$ and $C_{k1}(K)$ denotes $C_{k1}$ where $\Tilde{q}$ is replaced with $\Tilde{q}_K$. We can now conclude that
\begin{align*}
    \frac{k(n-k)}{n}C_{k1}(K)&=\sum\limits_{s=1}^K\lambda_s\Big(\frac{n}{k(n-k)}\Big(\sum\limits_{j=1}^k f_s(X_j)-\frac{k}{n}\sum\limits_{j=k+1}^n f_s(X_j)\Big)^2\\& - \frac{1}{n}\Big(\frac{n-k}{k}\sum\limits_{j=1}^k f^2_s(X_j)+\frac{k}{n-k}\sum\limits_{j=1}^n f^2_s(X_j)\Big)\Big). 
\end{align*}
Using the Donsker theorem \cite{suquet2009reproducing}, we obtain:
\begin{equation*}
    \Big\{\frac{1}{\sqrt{n}}\sum\limits_{i=1}^{[nt]} \mathbf{f_K}(X_i), t\in(0,1)\Big\}\xrightarrow[n\to\infty]{D((0, 1))} {\mathbf{W_K}(t), t\in(0,1)},
\end{equation*}
where $\mathbf{W_K}=(W_1, \dots, W_K)$ is a $K$-dimensional Wiener process with independent components in $\mathbb{R}^\mathbf{K}$ and $\mathbf{f_K}=(f_1, f_2, f_3, \dots, f_K)$ is a vector of eigenfunctions up to the position $K$. Therefore, we establish that the asymptotic distribution of $\max\limits_{1\leq k < n} \Big(\frac{k(n-k)}{n^2}\Big)^{\gamma}C_{k1}(K)$ is given by:
\begin{equation*}
\max\limits_{1\leq k < n} \Big(\frac{k(n-k)}{n^2}\Big)^{\gamma}\sum\limits_{s=1}^K\lambda_s\Big(\frac{n^2}{k(n-k) }(W_s(k/n)-\frac{k}{n}W_s(1))^2-1\Big).
\end{equation*}

\underline{Step (III):} Note that the random processes $\lambda_s\Big(\frac{n^2}{k(n-k) }(W_s(k/n)-\frac{k}{n}W_s(1))^2-1\Big)$ are independent for $s=K+1, K+2, \dots$.  Additionally, these random processes have a zero mean.
Furthermore, as their variances are finite, we can apply the H\'ajek-R\'enyi inequality, yielding the following:
\begin{align*}
     P\Big(\max\limits_{1\leq k < n} \Big(\frac{k(n-k)}{n^2}\Big)^{\gamma}\sum\limits_{s=K+1}^\infty \lambda_s\Big(\frac{n^2}{k(n-k) }(W_s(k/n)-\frac{k}{n}W_s(1))^2-1\Big) \geq A\Big)
   \leq \frac{D_5}{A^2}\sum\limits_{s=K+1}^\infty \lambda_s^2.
\end{align*}

From \eqref{nejednakostC}, we can observe that for a sufficiently large value of $K$, the asymptotic distributions of $C_{k1}$ and $C_{k1}(K)$ are close to each other. Consequently, by taking the limit as $K$ tends to infinity, we conclude that the asymptotic distribution of $\max\limits_{1\leq k < n}\Big(\frac{k(n-k)}{n^2}\Big)^{\gamma} C_{k1}$ is identical to that of
\begin{equation*}
   \max\limits_{t\in (0, 1)} \Big(t(1-t)\Big)^{\gamma}\sum\limits_{s=1}^\infty \lambda_s\Big(\frac{(W_s(t)-tW_s(1))^2}{t(1-t)}-1\Big).
\end{equation*}
This concludes the proof.
\end{proof}

Since the limiting null distribution of novel statistics is not free of distribution of $X_1$, for the derivation of p-values,  we suggest the usage of the permutation bootstrap algorithm from \cite{huvskova2006change}. This approach can be theoretically justified in a similar manner for our test.

\section{Power study}\label{sec::power}

In this section, we present the results of the power study. 
All computations are done using MATLAB \cite{MATLAB}. The algorithm developed in \cite{koev2006efficient} was implemented to evaluate the Bessel functions of two matrix arguments.

In our study, we fix the value of parameter $\nu$ to $\nu=1$, as is common practice in problems of this nature (see \cite{baringhaus2010empirical, nas1}). 

The empirical powers in various settings, for the level of significance $\alpha=0.05$, are obtained using a warp-speed modification of the permutation bootstrap algorithm  - Algorithm \ref{WSBA} (see, e.g., \cite{giacomini2013warp}), with $N=500$ replications.

The following distributions are considered:
\begin{enumerate}
    \item Wishart distribution with the shape parameter $a$ and the scale matrix $\Sigma$, denoted by $W_d(a, \Sigma)${, with a density
    \begin{equation*}
    f_{W, a, \Sigma}(X) = \frac{1}{\Gamma_d(a)}(\det\Sigma)^a(\det X)^{a-\frac{1}{2}(d+1)}\etr(-\Sigma X);
    \end{equation*}
    
    }
    \item Inverse Wishart distribution with the shape parameter $a$ and the scale matrix $\Sigma$, denoted by $IW_d(a, \Sigma)$, { with a density
    \begin{equation*}
        f_{IW, a, \Sigma} (X)=\dfrac{(\det \Sigma)^{\frac{a}{2}}\etr(-\frac{1}{2}\Sigma X^{-1})}{2^{\frac{a d}{2}} \Gamma_d(\frac{a}{2})(\det X)^{\frac{a+d+1}{2}}};
    \end{equation*}
    }
    \item Sample covariance matrix distributions obtained from the uniform vectors $(U_1, \dots, U_d)$, where $U_i\in \mathcal{U}[0, 1]$, denoted by $CMU_d$, { with a density
    \begin{equation*}
        f_{(U_1, \dots, U_d)} ((x_1, \dots, x_d))=1, \; x_i\in [0, 1],\; 1\leq i \leq d;
    \end{equation*}
}
 \item Sample covariance matrix distributions obtained from the random vectors having the multivariate  $t$ distribution  with $a$ degrees of freedom, denoted by $CMT_d(a, \Sigma)$, { with a density
\begin{equation*}
        f_{t, a} (x)=\frac{1}{(\det(\Sigma))^\frac{1}{2}}\frac{\Gamma(\frac{a+d}{2})}{\Gamma(\frac{d}{2})(a\pi)^\frac{d}{2}}(1+\frac{x'\Sigma^{-1}x}{a})^{-\frac{a+d}{2}}.
    \end{equation*} }
\end{enumerate}

In all cases above, we assume that $d$ represents the dimension of the respective matrices. When estimating sample covariance matrices, samples of dimension $d$ have been considered.

The empirical powers for $n=40$ and $n=100$, and different positions of the change point ($k$) and different values of the tuning parameter $\gamma$, for $d=2$ and $d=3$, are presented in Tables \ref{pow2} and \ref{pow3}, respectively.
From the results presented therein, we can see that the empirical sizes are close to the significance level, with just a few notable deviances that are decreasing with the sample size. Also notable is that the tests are not very sensitive to the value of the tuning parameter $\gamma$ - in most cases, discrepancies are not more than 10\%. This behavior is anticipated as the similar test statistics demonstrate robustness to variations in the parameter $\gamma$ \cite{huvskova2006change}.

On the other hand, the location of the change point seems to play an important role in its detection. In most cases, the central change points are easier to detect than those close to the boundaries. In asymmetric cases, optimizing the parameter $\gamma$ could enhance the test power. It is worth noting that the change point detection took place in cases where distributions shared the same expectation.

The loss of power with the increase of dimension is expected. However, the ratio between the rejection rate among the different alternatives remains approximately the same.

    \begin{table}[htbp] 
\caption{Powers of the test for different sample sizes and change point locations: $2\times 2$ matrices, $\gamma=0.5$ ($\gamma=1$).}\label{pow2}
\begin{adjustbox}{width=\linewidth,center}

\begin{tabular}{l|lllllllllll}
\hline
&\multicolumn{11}{c}{$n=40, k=20$}\\
Distributions & $W_{2}(2.5, I_2)$ & $IW_{2}(2.5, I_2)$ & $CMT_2(1, I_2)$ &$CMU_2$ & $W_{2}(2.5, 2I_2)$ & $IW_{2}(4, 2.5I_2)$ & $W_2(2.5, K_2)$ & $CMT_2(3, K_2)$ & $CMT_2(5, K_2)$ & $CMT_2(3, I_2)$ & $CMT_2(5, I_2)$ \\ \hline
$W_{2}(2.5, I_2)$ & 7 (4) & 29 (29) & 12 (11) & 100 (100) & 16 (23) & 27 (32) & 18 (33)  & 51 (52) & 57 (62) & 68 (71) & 80 (85) \\
$IW_{2}(2.5, I_2)$ &  & 4 (3) & 9 (12)& 100 (100) & 62 (68) & 6 (8) & 5 (6) & 19 (21) & 23 (26) & 32 (34) & 42 (49)\\
$CMT_2(1, I_2)$ &  &  & 6 (7) & 100 (100)  & 23 (35) & 4 (7) & 9 (7) & 23 (29) & 34 (38) & 47 (48) & 59 (60)\\
$CMU_2$ &  &  &  & 7 (4) & 100 (100) & 100 (100)  & 100 (100) & 100 (100) & 99 (99) & 98 (99) & 97 (98) \\
$W_{2}(2.5, 2I_2)$ &  &  &  &  & 6 (4) & 78 (82) & 67 (68) & 77 (80) & 85 (86) & 86 (91) & 93 (95) \\
$IW_{2}(4, 2.5I_2)$ &  &  &  &  &  &  4 (6) & 7 (10) & 26 (27) & 35 (43) & 41 (44) & 58 (57) \\ 
$W_2(2.5, K_2)$ &  &  &  &  &  &   & 8 (5)  & 21 (19) & 27 (27) & 32 (32) & 45 (45)  \\
$CMT_2(3, K_2)$ &  &  &  &  &  &   &   & 3 (3) & 4  (5)& 11 (6)  & 12 (11) \\
$CMT_2(5, K_2)$ &  &  &  &  &  &   &   &  & 2 (3) & 7 (7) & 11 (10) \\
$CMT_2(3, I_2)$ &  &  &  &  &  &   &   &  &  & 3 (4) & 7 (6) \\
$CMT_2(5, I_2)$ &  &  &  &  &  &   &   &  &  & & 7 (5) \\\hline
&\multicolumn{11}{c}{$n=40, k=10$}\\


Distributions& $W_{2}(2.5, I_2)$ & $IW_{2}(2.5, I_2)$ & $CMT_2(1, I_2)$ &$CMU_2$& $W_{2}(2.5, 2I_2)$ & $IW_{2}(4, 2.5I_2)$ & $W_2(2.5, K_2)$ &  $CMT_2(3, K_2)$ & $CMT_2(5, K_2)$ & $CMT_2(3, I_2)$ & $CMT_2(5, I_2)$\\ \hline
$W_{2}(2.5, I_2)$ & 4  (4)& 12 (13) & 7 (10) & 100 (100) & 18 (23) & 18 (16) & 10 (15) & 20 (21) & 26 (29) & 32 (35) &  49 (47) \\
$IW_{2}(2.5, I_2)$ & 25 (24) & 6 (4) & 6 (5) & 100 (100) & 55 (54) & 10 (5) & 6 (9) & 7 (6) & 9 (14) & 16 (20) &  26 (17) \\
$CMT_2(1, I_2)$ & 14 (10) & 5 (4) & 5 (6)  & 100 (100)  & 23 (24) & 8 (6) & 10 (6) & 10 (13)& 17 (17) & 32 (20) & 34 (31) \\
$CMU_2$ & 100 (100) & 100 (100) & 100 (100) & 5 (5) & 100 (100) & 100 (100) & 100 (100)  & 98 (96) & 98 (97) & 90 (85) &  82 (85)\\
$W_{2}(2.5, 2I_2)$ & 5 (9) & 28 (28) & 5 (6) & 100 (100)  & 3 (7) & 50 (43) & 34 (32) & 34 (37) & 53 (45) &  63 (55) &  70 (65)\\
$IW_{2}(4, 2.5I_2)$ & 19 (16) & 6 (3) & 3 (5) & 100 (100)  & 58 (64) & 4 (6) & 7 (4) & 10 (11) & 15 (14) & 13 (14) & 26 (24) \\
$W_2(2.5, K_2)$ & 19 (24) & 5 (7) & 3 (7) & 100 (100) & 66 (57) & 7 (7) & 5 (3) & 8 (11) & 20 (14) & 19 (18) & 27 (26) \\
$CMT_2(3, K_2)$ & 46 (43) & 16 (14) & 17 (14) & 98 (97) & 65 (65) & 26 (24)  & 21 (18)  & 6  (5)& 6 (8) & 6 (6) & 6 (8)\\
$CMT_2(5, K_2)$ & 59 (53) & 25 (23) & 23 (23) & 96 (96) & 75 (69) & 28 (28)  & 29 (20)  & 8 (6) & 4 (3) &  4 (5) &  8 (6) \\
$CMT_2(3, I_2)$ & 59 (53) & 28 (33) & 29 (35) & 92 (91) & 76 (78) & 36  (37) & 36 (23)  & 8 (8) & 12 (9) & 3 (3) & 2 (4) \\
$CMT_2(5, I_2)$ & 70 (65) & 36 (41) & 45 (43) & 88 (88) & 88 (81) & 52  (50) & 41 (33)  & 8 (7) & 7 (7) & 6 (6) & 5 (5) \\
\hline




\hline 
&\multicolumn{11}{c}{$n=100, k=50$}\\
Distributions & $W_{2}(2.5, I_2)$ & $IW_{2}(2.5, I_2)$ & $CMT_2(1, I_2)$ &$CMU_2$ & $W_{2}(2.5, 2I_2)$ & $IW_{2}(4, 2.5I_2)$ & $W_2(2.5, K_2)$ & $CMT_2(3, K_2)$ & $CMT_2(5, K_2)$ & $CMT_2(3, I_2)$ & $CMT_2(5, I_2)$ \\ \hline
$W_{2}(2.5, I_2)$ & 7 (5) & 64 (67) & 19 (22) & 100 (100) & 53 (65) & 65 (67) & 75 (74)  & 94 (94) & 99 (99) & 99 (99) & 100 (100) \\
$IW_{2}(2.5, I_2)$ &  & 6 (5) & 11 (16) & 100 (100) & 99 (99) & 9 (9) & 8 (10) & 33 (51) & 58 (58) & 70 (78) & 90 (92)\\
$CMT_2(1, I_2)$ &  &  & 5 (6) & 100 (100)  & 54 (61) & 14 (14) & 19 (25) & 59 (63) & 72 (77) & 90 (90) & 94 (96)\\
$CMU_2$ &  &  &  & 5 (6) & 100 (100) & 100 (100)   & 100 (100) & 100 (100) & 100 (100) & 100 (100) & 100 (100) \\
$W_{2}(2.5, 2I_2)$ &  &  &  &  & 4 (4) & 100 (100) & 99 (100) & 99 (100) & 100 (100) & 100 (100) & 100 (100) \\
$IW_{2}(4, 2.5I_2)$ &  &  &  &  &  &  7 (6) & 10 (9) & 51 (54) & 78 (79) & 85 (89) & 93 (96) \\ 
$W_2(2.5, K_2)$ &  &  &  &  &  &   & 4 (4)  & 42 (53) & 61 (65) & 73 (81)& 87 (86)  \\
$CMT_2(3, K_2)$ &  &  &  &  &  &   &   & 4 (6)  & 7 (6) & 18 (21)  & 31 (31) \\
$CMT_2(5, K_2)$ &  &  &  &  &  &   &   &  & 6 (5)& 9 (10) & 16 (17) \\
$CMT_2(3, I_2)$ &  &  &  &  &  &   &   &  &  & 4 (4) & 10 (10) \\
$CMT_2(5, I_2)$ &  &  &  &  &  &   &   &  &  & & 6 (6)\\
\hline 




&\multicolumn{11}{c}{$n=100, k=25$}\\
Distributions& $W_{2}(2.5, I_2)$ & $IW_{2}(2.5, I_2)$ & $CMT_2(1, I_2)$ &$CMU_2$ & $W_{2}(2.5, 2I_2)$ & $IW_{2}(4, 2.5I_2)$ & $W_2(2.5, K_2)$ & $CMT_2(3, K_2)$ & $CMT_2(5, K_2)$ & $CMT_2(3, I_2)$ & $CMT_2(5, I_2)$ \\ \hline
$W_{2}(2.5, I_2)$ & 7 (4) & 58 (51) & 23 (22) & 100 (100) & 17 (24) & 41 (49)& 61 (66)  & 86 (85) & 93 (92) & 95 (95) & 100 (99) \\
$IW_{2}(2.5, I_2)$ & 38 (45) & 7 (7) & 12 (11) & 100 (100) & 85 (87) & 6 (6) & 6 (7) & 27 (25) & 47 (45) & 59 (59) & 76 (73)\\
$CMT_2(1, I_2)$ & 8 (7) & 11 (10)& 4 (3) & 100 (100)  & 24 (32)& 5 (9) & 9 (13) & 43 (46) & 58 (58) & 71 (69) & 84 (87)\\
$CMU_2$ & 100 (100) & 100 (100) & 100 (100)  & 3 (3) & 100 (100) & 100 (100)  & 100 (100) & 100 (100) & 100 (100) & 100 (100) & 100 (100) \\
$W_{2}(2.5, 2I_2)$ & 46 (46)  & 94 (94) & 56 (59) & 100 (100) & 3 (5) & 99 (98) & 97 (98) & 99 (99) & 99 (99) & 100 (100) & 100 (100) \\
$IW_{2}(4, 2.5I_2)$ & 39 (33) & 8 (9)& 10 (10) & 100 (100) & 97 (97) &  4 (3)& 11 (8) & 49 (48)& 65 (65)& 79 (76) & 91 (90) \\ 
$W_2(2.5, K_2)$ & 29 (39) & 5 (5) & 13 (12) & 100 (100) & 93 (94) & 8 (9) & 7 (5)  & 30 (29)& 50 (51) & 68 (62)& 72 (73)  \\
$CMT_2(3, K_2)$ & 73 (58) & 17 (18) & 39 (43) & 100 (100) & 98 (95) & 28 (29)  & 21 (15)  & 5 (4) & 8 (7) & 12 (16)  & 23 (16) \\
$CMT_2(5, K_2)$ & 87 (88) & 30 (33) & 55 (50) & 100 (100) & 100 (99) & 45 (36) &  32 (34) & 5 (4) & 8 (6) & 9 (6) & 8 (14) \\
$CMT_2(3, I_2)$ & 89 (90) & 40 (43) & 68 (64) & 100 (100) & 99 (100) & 51 (60)  & 46 (43)  & 12 (10) & 7 (6) & 5 (7) & 6 (5)\\
$CMT_2(5, I_2)$ & 98 (98) & 61 (58) & 82 (75) & 100 (100) & 100 (100) & 73 (76)  & 65 (69)  & 15 (18) & 10 (10) & 6 (6) & 7 (7)\\
\hline


\hline
\end{tabular}
\end{adjustbox}
\end{table}

\begin{table}[htbp]
\caption{Powers of the test for different sample sizes and change point locations: $3\times 3$ matrices, $\gamma=0.5$ ($\gamma=1$).}\label{pow3}
\begin{adjustbox}{width=\linewidth,center}
\begin{tabular}{l|lllllllllllll}
\hline
&\multicolumn{11}{c}{$n=40, k=20$}\\
Distributions & $W_3(3, I_3)$ & $IW_3(3, I_3)$ & $CMT_3(1, I_3)$ &$CMU_3$& $W_3(3, 2I_3)$ & $IW_3(5, 3I_3)$ & $W_3(3, K_3)$ & $CMT_3(3, K_3)$ & $CMT_3(5, K_3)$  & $CMT_3(3, I_3)$ & $CMT_3(5, I_3)$\\ \hline
$W_3(3, I_3)$ & 5 (4) & 16 (19) & 7 (6) & 100 (100) & 9 (8) & 22 (42) & 8 (10) & 26 (38) & 36 (49) & 27 (38) & 40 (45) \\
$IW_3(3, I_3)$ &  & 5 (5)  & 7 (12) & 100 (100) & 15 (29)  & 6 (5) & 21 (27) & 11 (10) & 16 (17) & 14 (21) & 23 (24)\\
$CMT_3(1, I_3)$ &  &  & 6 (4)  & 100 (100)  & 7 (7)  & 12 (16) & 7 (7) & 23 (27) & 38 (44) & 30 (34) & 44 (46) \\
$CMU_3$ &  &  &  &  5 (5)&  100 (100)& 100 (100)& 100 (100)& 100 (100)& 100 (100)& 100 (100)& 100 (100)\\
$W_3(3, 2I_3)$ &  &  &  &  &  5 (5) & 45 (58) & 6 (5) & 32 (41)& 38 (58)& 41 (48) & 51 (59) \\
$IW_3(5, 3I_3)$ &  &  &  &  &  &  5 (6)  & 46 (62) & 17 (16) & 20 (16) & 16 (20) & 18 (28)\\
$W_3(3, K_3)$ &  &  &  &  &  &    & 4 (6) & 32 (46) & 49 (49) & 36 (47) & 43 (54)\\
$CMT_3(3, K_3)$ &  &  &  &  &  &   &   & 5 (6) & 6 (7)& 6 (6)& 4 (7) \\
$CMT_3(5, K_3)$ &  &  &  &  &  &   &   &  & 4 (4)& 6 (4) & 3 (3) \\
$CMT_3(3, I_3)$ &  &  &  &  &  &   &   &  &  & 6 (7)& 6 (7)\\
$CMT_3(5, I_3)$ &  &  &  &  &  &   &   &  &  &  & 6 (7)\\
\hline


\hline
&\multicolumn{11}{c}{$n=40, k=10$}\\
Distributions & $W_3(3, I_3)$ & $IW_3(3, I_3)$ & $CMT_3(1, I_3)$ &$CMU_3$& $W_3(3, 2I_3)$ & $IW_3(5, 3I_3)$ & $W_3(3, K_3)$ & $CMT_3(3, K_3)$ & $CMT_3(5, K_3)$  & $CMT_3(3, I_3)$ & $CMT_3(5, I_3)$\\ \hline
$W_3(3, I_3)$ & 5 (7) & 7 (6) & 6 (4) & 100 (100) & 11 (10) & 15 (15) & 10 (10) & 6 (13) & 13 (15) & 14 (12) & 18 (25) \\
$IW_3(3, I_3)$ & 17 (20) & 4 (4)  & 8 (10) & 100 (100)& 21 (25)  & 7 (4)& 19 (23)& 5 (8)& 5 (9)& 9 (9)& 6 (12)\\
$CMT_3(1, I_3)$ & 9 (7)& 4 (6)& 5 (6) & 100  (100)& 8  (8)& 9 (10)& 11 (11)& 8 (7)& 9 (14)& 9 (14)& 12 (16) \\
$CMU_3$ & 100 (100)& 100 (100)& 100 (100)&  5 (6) &  100 (100)& 100 (100)& 100 (100)& 100 (100)& 100 (100)& 100 (100)& 100 (100)\\
$W_3(3, 2I_3)$ & 3 (4)& 8 (8)& 5 (8)& 100 (100)&  5 (5)& 14 (17)& 8 (6)& 9 (16)& 22 (22)& 14 (18)& 17 (30)\\
$IW_3(5, 3I_3)$ &  27 (25)& 5 (4)& 8 (12)& 100 (100)& 42 (46) &  5 (6) & 42 (49)& 6 (6)& 10 (11)& 7 (12)& 13 (16)\\
$W_3(3, K_3)$ & 5 (5)& 7 (10)& 3 (4)& 100 (100)& 6 (4)& 20 (28)  & 6 (5)& 8 (13)& 17 (24)& 13 (17)& 18 (28)\\
$CMT_3(3, K_3)$ & 27 (31)& 9 (10)& 22 (24)& 100 (100)& 41 (40)& 14 (14)& 37 (38)& 4 (5)& 4 (5)& 6 (5)& 6 (6) \\
$CMT_3(5, K_3)$ & 32 (45)& 16 (16)& 27 (34)& 100 (100)& 41 (45)& 20 (20) & 36 (43) & 9 (7) & 3 (4) & 7 (7) & 6 (6) \\
$CMT_3(3, I_3)$ & 31 (33)& 16 (13)& 30 (30)& 100 (100)& 35 (34)& 20 (19) & 34 (40) & 7 (4) & 5 (8) & 5 (5)& 5 (5)\\
$CMT_3(5, I_3)$ & 43 (46)& 22 (20) & 32 (36) & 100 (100)& 39 (46) & 24 (23) & 40 (42) & 6 (7)& 5 (5)& 5 (4)& 7 (7)\\
\hline




\hline
&\multicolumn{11}{c}{$n=100, k=50$}\\
Distributions & $W_3(3, I_3)$ & $IW_3(3, I_3)$ & $CMT_3(1, I_3)$ &$CMU_3$& $W_3(3, 2I_3)$ & $IW_3(5, 3I_3)$ & $W_3(3, K_3)$ & $CMT_3(3, K_3)$ & $CMT_3(5, K_3)$  & $CMT_3(3, I_3)$ & $CMT_3(5, I_3)$\\ \hline
$W_3(3, I_3)$ & 5 (3) & 34 (51) & 6 (5) & 100 (100) & 13 (26)& 73 (84)& 21 (28)& 77 (79)& 91 (92)& 76 (86)& 95 (98)\\
$IW_3(3, I_3)$ &  & 5 (3) & 28 (51) & 100 (100)& 59 (75) & 8 (6)& 55 (74) & 26 (30) & 47 (51) & 35 (36) & 54 (63) \\
$CMT_3(1, I_3)$ &  &  & 6 (3) & 100  (100)& 11 (11)  & 48 (67) & 12 (17) & 75 (75) & 90 (91) & 75 (80) & 88 (92) \\
$CMU_3$ &  &  &  &  5 (5)&  100 (100)& 100 (100)& 100 (100)& 100 (100)& 100 (100)& 100 (100)& 100 (100)\\
$W_3(3, 2I_3)$ &  &  &  &  &  4 (4) & 98 (99)& 5 (6) & 89 (95) & 97 (98) & 93 (96) & 98 (98) \\
$IW_3(5, 3I_3)$ &  &  &  &  &  &  6 (6)  & 99 (100) & 30 (32) & 52 (53) & 39 (43) & 58 (63) \\
$W_3(3, K_3)$ &  &  &  &  &  &    & 6 (5) & 89 (92) & 96 (99) & 88 (95) & 98 (99) \\
$CMT_3(3, K_3)$ &  &  &  &  &  &   &   & 4 (5) & 8 (10) & 7 (6) & 11 (12) \\
$CMT_3(5, K_3)$ &  &  &  &  &  &   &   &  & 6 (6)& 6 (7)& 8 (6)\\
$CMT_3(3, I_3)$ &  &  &  &  &  &   &   &  &  & 5 (5) & 6 (9)\\
$CMT_3(5, I_3)$ &  &  &  &  &  &   &   &  &  &  & 7 (7)\\

\hline
&\multicolumn{11}{c}{$n=100, k=25$}\\
Distributions & $W_3(3, I_3)$ & $IW_3(3, I_3)$ & $CMT_3(1, I_3)$ &$CMU_3$& $W_3(3, 2I_3)$ & $IW_3(5, 3I_3)$ & $W_3(3, K_3)$ & $CMT_3(3, K_3)$ & $CMT_3(5, K_3)$  & $CMT_3(3, I_3)$ & $CMT_3(5, I_3)$\\ \hline
$W_3(3, I_3)$ & 5 (3) & 36 (40) & 7 (5) & 100 (100) & 6 (8) & 65 (66) & 3 (8) & 62 (70) & 86 (86) & 74 (72) & 89 (86) \\
$IW_3(3, I_3)$ & 14 (16) & 4 (5) & 17 (11) & 100 (100) & 24 (28) & 5 (8) & 28 (33) & 22 (21) & 38 (45) & 23 (24) & 43 (43) \\
$CMT_3(1, I_3)$ & 4 (6) & 23 (25) & 4 (4) & 100 (100) & 4 (6) & 28 (36) & 5 (6) & 61 (62) & 77 (76) & 71 (65) & 80 (82) \\
$CMU_3$ & 100 (100) & 100 (100) & 100 (100) & 100 (100) & 6 (4) &  100 (100) & 100 (100) & 100 (100) & 100 (100) & 100 (100) & 100 (100) \\
$W_3(3, 2I_3)$ & 18 (14) & 52 (57) & 13 (11) & 100 (100) & 6 (5) & 89 (90) & 4 (6) & 75 (80) & 92 (94) & 87 (85) & 94 (94) \\
$IW_3(5, 3I_3)$ & 46 (50) & 7 (4) & 32 (35) & 100 (100) & 76 (79) & 5 (6) & 73 (84) & 24 (24) & 52 (48) & 35 (35) & 49 (55) \\
$W_3(3, K_3)$ & 25 (31) & 57 (64) & 14 (14) & 100 (100) & 9 (5) & 89 (94) & 5 (6) & 78 (81) & 87 (91) & 79 (82) & 92 (94) \\
$CMT_3(3, K_3)$ & 46 (45) & 10 (12) & 36 (30) & 100 (100) & 45 (61) & 12 (17) & 59 (64) & 7 (5) & 6 (4) & 4 (4) & 10 (6)  \\
$CMT_3(5, K_3)$ & 54 (56) & 16 (22) & 66 (58) & 100 (100) & 74 (81) & 23 (19) & 74 (78) & 4 (7) & 4 (5) & 4 (3) & 7 (7) \\
$CMT_3(3, I_3)$ & 32 (52) & 13 (18) & 39 (42) & 100 (100) & 59 (57) & 14 (15) & 60 (62) & 5 (4) & 5 (5) & 3 (4) & 7 (5) \\
$CMT_3(5, I_3)$ & 60 (69) & 25 (28) & 64 (69) & 100 (100) & 77 (85) & 23 (29)  & 85 (75) & 7 (10) & 4 (5) & 5 (4) & 3 (4) \\
\hline
\end{tabular}
\end{adjustbox}
\end{table}

\section{Real data examples}\label{sec::realdata}
In this section, we present recent real-world data examples that clearly demonstrate the applicability of the proposed methodology.

\subsection{Example 1 - most recent cryptocurrency data}

In this example, most recent Bitcoin (BTC) and Ethereum (ETH) data are analyzed. The hourly BTC-USD and ETH-USD data for the period 01-01-2023 to 01-07-2023 is obtained from Gemini (\href{http://www.gemini.com}{http://www.gemini.com}). The close prices $X_t$ are considered. The hourly logarithmic returns $\log\frac{X_t}{X{t-1}}$ are computed for each hour. The sample covariance matrices are computed for each day, giving a total of $n=181$ observations. Multiple change points are detected using a binary segmentation algorithm (Algorithm \ref{BSAl}). We have used $Window=10$ and $NB=500$. We highlight the importance of properly choosing the parameter $Window$, as its large values cause the algorithm not to detect change points, while its small values cause the algorithm to detect unimportant points, especially in the analysis of volatile assets. Therefore, fine-tuning of the parameter $Window$ is necessary to provide optimal performance. We have analyzed the performance for $\gamma=1$ and $\gamma=0.5$, but no difference in change point estimated locations was observed. The important crypto-related news is given in Table \ref{matrixNews}. Note that no change points were detected in the period January - February, which is expected since crypto markets were mostly on the rise during this period of time. Bitcoin did suffer a mid-February drop but recovered quickly. Ethereum was unaffected \cite{januar, februar}. The information on change point location can prove useful in validating novel trading systems on the newest data. It can especially prove to be useful in automatically evaluating the conservativeness of a certain market strategy in conditions in which other means may prove to be futile due to the extreme underlying volatility of the assets in question.

\begin{table}[htbp]
\centering
\caption{Important cryptocurrency-related events on the day, day before or day after the detected change point}
\label{matrixNews}
\begin{tabular}{@{}lll@{}}
\hline
Change point date & Important news & Reference \\ \hline
2 March & Significant BTC and ETH drops on 3 March & \cite{mat1}\\
17 March & Bitcoin on the rise & \cite{mat2} \\
4 April & Significant drops on crypto markets & \cite{mat3} \\
18 April & Crypto market significantly falls & \cite{mat4} \\
2 May & Crypto prices on the rise & \cite{mat5} \\
17 May & Crypto trading should be treated as a form of gambling in the UK & \cite{mat6} \\
1 June & Cryptocurrencies reacting to debt ceiling deal & \cite{mat7} \\
16 June & Fall after Federal Reserve June meeting & \cite{mat8} \\ \hline
\end{tabular}
\end{table}

\subsection{Example 2 - performance on well-documented cases}
In this study, we analyze the data example provided in \cite{nas1} from a change point perspective. Specifically, we examine the behavior of the covariance structure of hourly logarithmic returns for a period of 15 days before and 15 days after well-documented drops in Bitcoin price that align with significant historical events \cite{fruehwirt2021cumulation}. Each period consists of $n=720$ data points, and the results are presented in Table \ref{pvals1hr}. However, the majority of the tests failed to detect the presence of change points both before and after the prominent events, with the exception of the month of February where increased volatility may have contributed to improved detection. Based on these findings, we conclude that this method is practically unusable for this particular analysis.

We conducted the change point analysis of 1-minute BTC (Bitcoin) \cite{BTCdata} and ETH (Ethereum) \cite{ETHdata} data. We selected two-day periods and calculated the logarithmic returns for each minute. Covariance matrices were computed for each hour, resulting in a total of $n=48$ covariance matrices, with 24 matrices for the first day and 24 for the second day. The results, presented in Table \ref{pvals1min}
indicate that our test successfully identified change points on the day before and on the day of the event's occurrence. The test never fails to detect the change point or lags with the detection. This finding holds potential practical significance in the implementation of stop losses in trading strategies \cite{kaminski2014stop}. Both tests, for $\gamma=0.5$ and $\gamma=1$, produced similar results.
\begin{table}[htbp]
\caption{p-values (change point location) in covariance structure before and after the Bitcoin important events - 1 minute data, $\gamma=0.5$ [$\gamma=1$].}
\label{pvals1min}
\centering
\begin{adjustbox}{width=\linewidth,center}
\begin{tabular}{@{}lp{2.3in}lllll@{}}
\hline
Date of event ($T_0$) & Event description & $p_{[T_0-2D, T_0-1D]}$ & $p_{[T_0-1D, T_0]}$ & $p_{[T_0,  T_0+1D]}$ &  $p_{[T_0-2D, T_0+1D]}$\\ \hline
8 November 2017 & Developers cancel splitting of Bitcoin. & 0.0980(27) [0.0940 (27)] & 0.0028 (41) [0.0016 (41)] & 0.2120 (17) [0.1080 (17)] &  0.0420 (65) [0 (65)]  \\
28 December 2017 & South Korea announces strong measures to regulate the trading of cryptocurrencies. & 0.0860 (39) [0.1400 (39)] & 0.0040 (26) [0 (26)]& 0.0080 (11) [0.0060 (25)] & 0.0800 (50) [0.020 (50)] \\
13 January 2018 & Announcement that 80\% of Bitcoin has been mined. & 0.0020 (7) [0 (7)] & 0 (7) [0 (7)] & 0.2060 (28) [0.1120 (28)] & 0 (31) [0 (31)] \\
30 January 2018 & Facebook bans advertisements promoting cryptocurrencies. & 0.1280 (33) [0.0800 (33)] & 0.0100 (37) [0 (37)] & 0.2220 (13) [0.1420 (17)] &  0 (61) [0 (61)] \\
7 March 2018 & The US Securities and Exchange Commission says it is necessary for crypto trading platforms to register. & 0 (41) [0.0020 (34)] & 0 (40) [0 (40)] & 0.1060 (16) [0.0040 (16)] & 0 (64) [0 (64)] \\
14 March 2018 & Google bans advertisements promoting cryptocurrencies. & 0.0060 (14) [0.0080 (14)] & 0.2260 (40) [0.3100 (33)] & 0.0040 (16)[0.0140 (16)] & 0.0180 (64)  [0.030 (64)]\\ \hline
\end{tabular}
\end{adjustbox}
\end{table}

\begin{table}[htbp]
\caption{p-values (change point location) in covariance structure before and after the Bitcoin important events - 1 hour data, $\gamma=0.5$ [$\gamma = 1$].}
    \label{pvals1hr}
    \begin{adjustbox}{width=\linewidth,center}
    \begin{tabular}{@{}llllp{2.9in}lll@{}}
\hline
Period I start date  & Period II start date& Date of event ($T_0$) & Period II end date & Event description & $p_{[T_0-30D, T_0]}$ & $p_{[T_0-15D, T_0+15D]}$ \\ \hline
9 October 2017 & 24 October 2017 & 8 November 2017 & 23 November 2017 & Developers cancel splitting of Bitcoin. & 0.29 (17) [0.20 (17)] & 0.32 (15) [0.20 (15)] \\
28 November 2017 & 13 December 2017 & 28 December 2017 & 12 January 2018 & South Korea announces strong measures to regulate trading of cryptocurrencies. & 0.30 (24) [0.26 (24)] & 0.59 (12) [0.58 (12)] \\
14 December 2017 & 28 December 2017 & 13 January 2018 & 28 January 2018 & Announcement that 80\% of Bitcoin has been mined. & 0.63 (11) [0.60 (11)] & 0.25 (19) [0.20 (19)]\\
31 December 2017 & 15 January 2018 & 30 January 2018 & 14 February 2018 & Facebook bans advertisements promoting, cryptocurrencies. & 0.59 (16) [0.41 (16)] & 0.14 (3) [0.31 (4)] \\
5 February 2018 & 20 February 2018 & 7 March 2018 & 22 March 2018 & The US Securities and Exchange Commission says it is necessary for crypto trading platforms to register. & 0 (3) [0 (6)] & 0 (15) [0 (15)]\\
12 February 2018 & 28 February 2018 & 14 March 2018 & 29 March 2018 & Google bans advertisements promoting cryptocurrencies. & 0 (23) [0 (23)] & 0.04 (7) [0.04 (7)]\\ \hline
\end{tabular}
    \end{adjustbox}
\end{table}


\subsection{Example 3 - high-dimensional data}
Note that this example is for the illustrative purposes only and it opens questions for further research.
The novel test is applied on the subset of German DAX index data. The following symbols are considered: 

ADS.DE, ALV.DE, BAS.DE, BAYN.DE, BEI.DE, BMW.DE, CON.DE, DB1.DE, DBK.DE, DHER.DE, DPW.DE, DTE.DE, DWNI.DE, EOAN.DE, FME.DE, FRE.DE, HEI.DE, HEN3.DE, IFX.DE, LHA.DE, LIN.DE, MRK.DE, MUV2.DE, RWE.DE, SAP.DE, SIE.DE, VNA.DE, and VOW3.DE. 

The period of 1-1-2022 to 1-1-2023 is considered, giving the total of $257$ trading days. The data are split in the sections of five consecutive trading days and the covariance matrix is estimated on every section, giving the total of $N=52$ observations. Since the dimensionality of data is high and the computational effort seems considerable, data reduction has to be performed prior to the test application. The Principal Component Analysis (PCA) is applied on the whole dataset. 
The explained variance (PCA EV) is given in Table \ref{DAXTB}. The reduction is performed by replacing a covariance matrix of the given dimension with a diagonal matrix, where the diagonal is populated with the largest eigenvalues. The largest eigenvalues (LE) per position are given in Table \ref{DAXTB} as well. From Table \ref{DAXTB}, it is clear that dimensions 2, 3, 4, and 5 can be considered. A binary search procedure is implemented on the reduced data, with $Window=5$ and $NB=500$. Every procedure gives the change point location $k=50$, suggesting that the single change point exists in the data and that the period from 11 to 17 December 2022 exhibits a potential change of behavior in the German market.

\begin{table}[htbp]
\centering
\caption{PCA explained variance and largest eigenvalues per dimension}
\label{DAXTB}
\begin{adjustbox}{width=\linewidth,center}
\begin{tabular}{llllllllllll}
\hline
Dim & PCA EV & LE & Dim & PCA EV & LE &  Dim & PCA EV & LE & Dim & PCA EV &  LE  \\\hline
1 & 0.6035 & 619.249 & 8 & 0.9665 & 1.57E-15 & 15 & 0.9926 & 2.15E-16 & 22 & 0.9985 & 5.70E-19   \\
2 & 0.7751 & 53.69442 & 9 & 0.9734 & 1.15E-15 & 16 & 0.9938 & 1.01E-16 & 23 & 0.9989 & 0  \\
3 & 0.8571 & 16.55289 & 10 & 0.9794 & 9.71E-16 & 17 & 0.9950 & 7.40E-17 & 24 & 0.9993 & 0  \\
4 & 0.8985 & 9.336243 & 11 & 0.9837 & 8.16E-16 & 18 & 0.9959 & 2.45E-17 & 25 & 0.9996 & 0  \\
5 & 0.9218 & 3.87E-14 & 12 & 0.9865 & 5.56E-16 & 19 & 0.9967 & 1.20E-17 & 26 & 0.9998 & 0  \\
6 & 0.9440 & 1.20E-14 & 13 & 0.9889 & 4.99E-16 & 20 & 0.9974 & 6.64E-18 & 27 & 0.9999 & 0  \\
7 & 0.9567 & 5.90E-15 & 14 & 0.9910 & 2.45E-16 & 21 & 0.9980 & 2.93E-18 & 28 & 1.0000 & 0 \\\hline 
\end{tabular}
\end{adjustbox}
\end{table}

\section*{Conclusion}

This study introduces a new class of tests for detecting change points in the distribution of orthogonally invariant matrix-valued random variables. The tests are based on the Hankel transform and have been shown to be effective in detecting change points, especially in cryptocurrency markets. This paper presents the theoretical properties of the tests and demonstrates their performance through simulation studies. Real data examples from cryptocurrency markets are also provided to illustrate the practical use of the tests.

There is a significant area for further research, particularly in exploring different types of integral transforms such as Laplace transforms or Fourier transforms. Additionally, the focus could be shifted towards handling high-dimensional data, ultra-high frequency data, and financial assets with lower volatility.

\section*{Declaration of interest}
The authors have no interest to declare.
\section*{Acknowledgements}
The work of B. Milo\v sevi\'c is  supported by the Ministry of Science, Technological Development and Innovations of the Republic of Serbia (the contract 451-03-47/2023-01/ 200104). In addition, the results of this paper are based upon  her work
from COST Action HiTEc-Text, functional and other high-dimensional data in econometrics: New models, methods, applications, CA21163, supported by COST (European Cooperation in Science and Technology).

\bibliographystyle{abbrv}  
\bibliography{references}

\begin{thebibliography}{10}

\bibitem{achcar2010non}
J.~A. Achcar, E.~R. Rodrigues, C.~D. Paulino, and P.~Soares.
\newblock {Non-homogeneous Poisson models with a change-point: an application
  to ozone peaks in Mexico city}.
\newblock {\em Environmental and Ecological Statistics}, 17:521--541, 2010.

\bibitem{andreou2002detecting}
E.~Andreou and E.~Ghysels.
\newblock {Detecting multiple breaks in financial market volatility dynamics}.
\newblock {\em Journal of Applied Econometrics}, 17(5):579--600, 2002.

\bibitem{andreou2009structural}
E.~Andreou and E.~Ghysels.
\newblock {Structural breaks in financial time series}.
\newblock {\em Handseries of financial time series}, pages 839--870, 2009.

\bibitem{baringhaus2015two}
L.~Baringhaus and D.~Kolbe.
\newblock {Two-sample tests based on empirical Hankel transforms}.
\newblock {\em Statistical Papers}, 56:597--617, 2015.

\bibitem{baringhaus2010empirical}
L.~Baringhaus and F.~Taherizadeh.
\newblock {Empirical Hankel transforms and its applications to goodness-of-fit
  tests}.
\newblock {\em Journal of Multivariate Analysis}, 101(6):1445--1457, 2010.

\bibitem{borovskikh1994theory}
Y.~V. Borovskikh and V.~S. Korolyuk.
\newblock {\em Theory of U-statistics}.
\newblock Kluwer, Dordrecht, 1994.

\bibitem{brodsky1993nonparametric}
E.~Brodsky and B.~S. Darkhovsky.
\newblock {\em Nonparametric methods in change point problems}, volume 243.
\newblock Springer Science \& Business Media, Berlin, 1993.

\bibitem{chakar2017robust}
S.~Chakar, E.~Lebarbier, C.~Lévy-Leduc, and S.~Robin.
\newblock {A robust approach for estimating change-points in the mean of an
  $\operatorname{AR}(1)$ process}.
\newblock {\em Bernoulli}, 23(2):1408 – 1447, 2017.

\bibitem{chen2012parametric}
J.~Chen and A.~K. Gupta.
\newblock {\em Parametric statistical change point analysis: with applications
  to genetics, medicine, and finance}.
\newblock Springer, Berlin, 2012.

\bibitem{contal1999application}
C.~Contal and J.~O'Quigley.
\newblock {An application of changepoint methods in studying the effect of age
  on survival in breast cancer}.
\newblock {\em Computational statistics \& data analysis}, 30(3):253--270,
  1999.

\bibitem{cuparic2022new}
M.~Cupari{\'c}, B.~Milo{\v{s}}evi{\'c}, and M.~Obradovi{\'c}.
\newblock {New consistent exponentiality tests based on V-empirical Laplace
  transforms with comparison of efficiencies}.
\newblock {\em Revista de la Real Academia de Ciencias Exactas, F{\'\i}sicas y
  Naturales. Serie A. Matem{\'a}ticas}, 116(1):42, 2022.

\bibitem{davis2006structural}
R.~A. Davis, T.~C.~M. Lee, and G.~A. Rodriguez-Yam.
\newblock {Structural break estimation for nonstationary time series models}.
\newblock {\em Journal of the American Statistical Association},
  101(473):223--239, 2006.

\bibitem{dehling2013non}
H.~Dehling, A.~Rooch, and M.~S. Taqqu.
\newblock {Non-parametric change-point tests for long-range dependent data}.
\newblock {\em Scandinavian Journal of Statistics}, 40(1):153--173, 2013.

\bibitem{fruehwirt2021cumulation}
W.~Fruehwirt, L.~Hochfilzer, L.~Weydemann, and S.~Roberts.
\newblock {Cumulation, crash, coherency: A cryptocurrency bubble wavelet
  analysis}.
\newblock {\em Finance Research Letters}, 40:101668, 2021.

\bibitem{fryzlewicz2014wild}
P.~Fryzlewicz.
\newblock {Wild binary segmentation for multiple change-point detection}.
\newblock {\em The Annals of Statistics}, 42(6):2243 – 2281, 2014.

\bibitem{mat2}
Y.~Gaur.
\newblock {Bitcoin rises 9.2\% to \$27,359}.
\newblock {\em Reuters}, 2023.

\bibitem{mat1}
K.~Ghosh.
\newblock {Bitcoin falls 5.2\% to \$22,253}.
\newblock {\em Reuters}, 2023.

\bibitem{giacomini2013warp}
R.~Giacomini, D.~N. Politis, and H.~White.
\newblock {A warp-speed method for conducting Monte Carlo experiments involving
  bootstrap estimators}.
\newblock {\em Econometric theory}, 29(3):567--589, 2013.

\bibitem{gupta2018matrix}
A.~K. Gupta and D.~K. Nagar.
\newblock {\em Matrix variate distributions}.
\newblock Chapman and Hall/CRC, London, 2018.

\bibitem{hadjicosta2020Gamma}
E.~Hadjicosta and D.~Richards.
\newblock {Integral transform methods in goodness-of-fit testing, I: the gamma
  distributions}.
\newblock {\em Metrika}, 83(7):733--777, 2020.

\bibitem{hadjicosta2020integral}
E.~Hadjicosta and D.~Richards.
\newblock {Integral transform methods in goodness-of-fit testing, II: the
  Wishart distributions}.
\newblock {\em Annals of the Institute of Statistical Mathematics},
  72:1317--1370, 2020.

\bibitem{hawkins2010nonparametric}
D.~M. Hawkins and Q.~Deng.
\newblock {A nonparametric change-point control chart}.
\newblock {\em Journal of Quality Technology}, 42(2):165--173, 2010.

\bibitem{henze2012goodness}
N.~Henze, S.~G. Meintanis, and B.~Ebner.
\newblock {Goodness-of-fit tests for the gamma distribution based on the
  empirical Laplace transform}.
\newblock {\em Communications in Statistics-Theory and Methods},
  41(9):1543--1556, 2012.

\bibitem{herz1955bessel}
C.~S. Herz.
\newblock {Bessel functions of matrix argument}.
\newblock {\em Annals of Mathematics}, pages 474--523, 1955.

\bibitem{hlavka2020change}
Z.~Hl{\'a}vka, M.~Hu{\v{s}}kov{\'a}, and S.~G. Meintanis.
\newblock {Change-point methods for multivariate time-series: paired vectorial
  observations}.
\newblock {\em Statistical Papers}, 61:1351--1383, 2020.

\bibitem{huvskova2006change}
M.~Hu{\v{s}}kov{\'a} and S.~G. Meintanis.
\newblock {Change Point Analysis based on Empirical Characteristic Functions:
  Empirical Characteristic Functions}.
\newblock {\em Metrika}, 63:145--168, 2006.

\bibitem{huvskova2006rank}
M.~Hu{\v{s}}kov{\'a} and S.~G. Meintanis.
\newblock {Change-point analysis based on empirical characteristic functions of
  ranks}.
\newblock {\em Sequential Analysis}, 25(4):421--436, 2006.

\bibitem{MATLAB}
T.~M. Inc.
\newblock {MATLAB version: 9.14.0.2206163 (R2023a)}, 2023.

\bibitem{kaminski2014stop}
K.~M. Kaminski and A.~W. Lo.
\newblock {When do stop-loss rules stop losses?}
\newblock {\em Journal of Financial Markets}, 18:234--254, 2014.

\bibitem{koev2006efficient}
P.~Koev and A.~Edelman.
\newblock {The efficient evaluation of the hypergeometric function of a matrix
  argument}.
\newblock {\em Mathematics of Computation}, 75(254):833--846, 2006.

\bibitem{ETHdata}
P.~Kottarathil.
\newblock {Ethereum Historical Dataset}.
\newblock
  \url{https://www.kaggle.com/datasets/prasoonkottarathil/ethereum-historical-dataset?select=ETH_1min.csv},
  2020.
\newblock Version 2. Accessed: 2023-04-09.

\bibitem{mat7}
I.~Lee and V.~Hajrić.
\newblock {Bitcoin Faces Fresh Challenges After Debt Deal Moves Forward,
  Citigroup Warns}.
\newblock {\em Bloomberg}, 2023.

\bibitem{li2012multiple}
S.~Li and R.~Lund.
\newblock {Multiple changepoint detection via genetic algorithms}.
\newblock {\em Journal of Climate}, 25(2):674--686, 2012.

\bibitem{nas1}
{\v{Z}}.~Luki{\'c} and B.~Milo{\v{s}}evi{\'c}.
\newblock {A novel two-sample test within the space of symmetric positive
  definite matrix distributions and its application in finance}, 2023.

\bibitem{mat4}
T.~Macheel.
\newblock {Bitcoin and ether fall as investors weigh persistent inflation and
  rising interest rates}.
\newblock {\em CNBC}, 2023.

\bibitem{mat8}
T.~Macheel.
\newblock {Bitcoin briefly drops below \$25,000, Tether’s stablecoin falls
  under its dollar peg}.
\newblock {\em CNBC}, 2023.

\bibitem{februar}
T.~Macheel.
\newblock {Bitcoin, ether on track for a positive February despite mid-month
  drop and fading 2023 risk rally}.
\newblock {\em CNBC}, 2023.

\bibitem{mat6}
K.~Makortoff.
\newblock {Cryptocurrency trading in UK should be regulated as form of
  gambling, say MPs}.
\newblock {\em The Guardian}, 2023.

\bibitem{meintanis2008testing}
S.~G. Meintanis and M.~Hu{\v{s}}kov{\'a}.
\newblock {Testing procedures based on the empirical characteristic functions
  II: k-sample problem, change point problem}.
\newblock {\em Tatra Mt. Math. Publ}, 39:235--243, 2008.

\bibitem{milovsevic2016new}
B.~Milo{\v{s}}evi{\'c} and M.~Obradovi{\'c}.
\newblock {New class of exponentiality tests based on U-empirical Laplace
  transform}.
\newblock {\em Statistical Papers}, 57(4):977--990, 2016.

\bibitem{muirhead2009aspects}
R.~J. Muirhead.
\newblock {\em Aspects of multivariate statistical theory}.
\newblock John Wiley \& Sons, 1982.

\bibitem{mat3}
A.~Pant.
\newblock {Crypto Price Today: Bitcoin below 28k, Ethereum and other tokens
  edge lower}.
\newblock {\em CNBCTV18}, 2023.

\bibitem{januar}
A.~J. Permal.
\newblock {Bitcoin pumped 43\% in January 2023! What to expect in February —
  Watch The Market Report live}.
\newblock {\em Cointelegraph}, 2023.

\bibitem{serfling2009approximation}
R.~J. Serfling.
\newblock {\em Approximation theorems of mathematical statistics}.
\newblock John Wiley \& Sons, New York, 2009.

\bibitem{suquet2009reproducing}
C.~Suquet.
\newblock {Reproducing kernel Hilbert spaces and random measures}.
\newblock pages 143--152. World Scientific, 2009.

\bibitem{tan2016nonparametric}
C.~Tan, X.~Shi, X.~Sun, and Y.~Wu.
\newblock {On nonparametric change point estimator based on empirical
  characteristic functions}.
\newblock {\em Science China Mathematics}, 59:2463--2484, 2016.

\bibitem{thies2018bayesian}
S.~Thies and P.~Moln{\'a}r.
\newblock {Bayesian change point analysis of Bitcoin returns}.
\newblock {\em Finance Research Letters}, 27:223--227, 2018.

\bibitem{tsukuda2014l2}
K.~Tsukuda and Y.~Nishiyama.
\newblock {On L2 space approach to change point problems}.
\newblock {\em Journal of Statistical Planning and Inference}, 149:46--59,
  2014.

\bibitem{mat5}
J.~Yang.
\newblock {Bitcoin Climbs Above \$28.5K as Investors Weigh Fresh Bank Woes,
  Cool Jobs Data}.
\newblock {\em Coindesk}, 2023.

\bibitem{BTCdata}
M.~Zielinski.
\newblock {Bitcoin Historical Data}.
\newblock
  \url{https://www.kaggle.com/datasets/mczielinski/bitcoin-historical-data},
  2021.
\newblock Version 7. Accessed: 2023-04-09.

\end{thebibliography}

\newpage
\section*{Appendix}
\begin{algorithm}
  \caption{Binary segmentation algorithm $BS(x, NB, \gamma, \alpha, Window)$}\label{BSAl}
  \begin{algorithmic}[1]
  \State Compute $n$=Length(x).
  \If{$n$<Window}
    \State Return NULL.
    \Else
    \State Compute $\mathcal{J}^{*}=\mathcal{J}_\gamma(x)$.
    \State Compute $x_{i_1}, x_{i_2}, \dots, x_{i_{NB}}$ - different permutations without replacement of $x$. 
    \State Estimate the distribution under the null hypothesis with 
    
    $\text{Null}=\mathcal{J}_\gamma(x_{i_1}), \mathcal{J}_\gamma(x_{i_2}), \dots, \mathcal{J}_\gamma(x_{i_{NB}})$.
    \State Estimate the p-value as $\text{pval}=\text{Mean}(\text{Null}>\mathcal{J}^*)$.
    \If{\text{pval}  $>\alpha$}
    \State Return NULL.
    \Else
    \State Compute the position for which the maximum in $\mathcal{J}^*$ is attained. Denote this position with $k^*$.
    \State Split $x$ into two subsamples $x_{h1}=x_1, \dots, x_{[n/2]}$ and 
    
    $x_{h2}=x_{[n/2]+1},x_{[n/2]+2}, \dots, x_n$.
    \State Return list 
    
    $[k^*, BS(x_{h1}, NB, \gamma, \alpha, Window),[n/2]+BS(x_{h2}, NB, \gamma, \alpha, Window)].$
    \EndIf
    \EndIf
  \end{algorithmic}
\end{algorithm}

\begin{algorithm}
  \caption{Warp-speed bootstrap permutation algorithm}\label{WSBA}
  \begin{algorithmic}[1]
    \State Sample $\textbf{x}=(x_1, \dots, x_{n_1})$ from $ F_X$ and $\mathbf{y}=(y_1, \dots, y_{n_2})$ from $F_Y$ and let $n=n_1+n_2$ , and create a pooled sample $\textbf{z}=(z_1, z_2, \dots, z_{n_1+n_2})=(x_1, \dots, x_{n_1}, y_1, y_2, \dots, y_{n_2})$.;
    \State Compute $\mathcal{I}_{n,  \gamma, \nu}:=\mathcal{I}_{n,  \gamma, \nu}(\textbf{z})$;
    \State Generate random permutations $\pi: \{1, 2, \dots, n_1+n_2\}\to \{1, 2, \dots, n_1+n_2\}$ and the corresponding bootstrap permutation sample $\textbf{z*}=(z_{\pi(1)}, \dots, z_{\pi(n_1)}, z_{\pi(n_1+1)}, \dots, z_{\pi(n_1+n_2)})$.
    \State Compute $\mathcal{I}^*_{n,  \gamma, \nu}:=\mathcal{I}_{n,  \gamma, \nu}(\textbf{z}*)$;
    \State Repeat steps 1-4 N times and obtain two sequences of statistics $\{\mathcal{I}_{n,  \gamma ,\nu}^{(j)}\}$ and $\{\mathcal{I}_{n,  \gamma,\nu}^{*(j)}\}$,  $j=1,...,N$; 
    \State Reject the null hypothesis for the $j$--sample ($j=1,...,N$), if $\mathcal{I}_{n,  \gamma, \nu}^{(j)}>c_\alpha$,  where $c_\alpha$ denotes the $(1-\alpha)\%$ quantile of the empirical distribution of the bootstrap test statistics  $(\mathcal{I}_{n,  \gamma, \nu}^{*(j)}, \ j=1,...,N)$.
     
  \end{algorithmic}
\end{algorithm}

\end{document}